\documentclass[12pt]{article}
\usepackage[margin=1in]{geometry}

\usepackage{amsmath,amssymb,amsthm}
\usepackage{pgfplots}
\usepackage{comment}
\usepackage[textsize=footnotesize,color=green!40]{todonotes}
\usepackage[utf8]{inputenc}
\usepackage{amsfonts}
\usepackage{amssymb}
\usepackage{mdframed}
\usepackage{framed}
\usepackage{hyperref}
\usepackage{mathtools}
\usepackage{authblk}
\usepackage{fancyhdr}
\usepackage{float}

\usepackage[noend]{algorithmic}
\usepackage[algoruled,vlined,nofillcomment,algo2e]{algorithm2e}
\usepackage{bm}

\usepackage{tikz}
\usetikzlibrary{calc,backgrounds}
\usetikzlibrary{arrows,calc,decorations.pathreplacing,positioning,shapes,shapes.geometric,shapes.callouts}
\usetikzlibrary{shapes,intersections,external}
\usepackage{pgfplots}

\newcommand{\dom}[1]{\mathbb{#1}}
\newcommand{\mech}[1]{\mathcal{#1}}
\newcommand{\tup}[1]{\vec{#1}}

\newcommand{\EE}{\mathbb{E}}
\newcommand{\PP}{\mathbb{P}}

\newcommand{\mse}{\mathrm{MSE}}
\newcommand{\SD}{\mathrm{SD}}
\newcommand{\bernoulli}{\mbox{\rm Ber}}
\newcommand{\uniform}{\texttt{Unif}}

\newcommand{\polya}{\texttt{Polya}}

\newcommand{\discretelaplace}{\texttt{DLap}}
\newcommand{\fp}{{\mbox{\rm fp}}}

\usepackage[textsize=footnotesize,color=green!40]{todonotes}

\usepackage{amsthm}

\newtheorem{theorem}{Theorem}[section]
\newtheorem{lemma}{Lemma}[section]

 \newtheorem{remark}{Remark}[section]
\date{}
\author{Borja Balle~~~James Bell\thanks{The Alan Turing Institute. Work supported by the UK Government’s Defence \& Security Programme in support of the Alan Turing Institute.}~~~Adri{\`a} Gasc{\'o}n\thanks{The Alan Turing Institute and Warwick University. Work supported by The Alan Turing Institute under the EPSRC grant EP/N510129/1, and the UK Government’s Defence \& Security Programme in support of the Alan Turing Institute.}~~~Kobbi Nissim\thanks{Department of Computer Science, Georgetown University. {\tt kobbi.nissim@georgetown.edu}. Work supported by NSF grant no.~1565387, 
TWC: Large: Collaborative: Computing Over Distributed Sensitive Data.}}
\begin{document}

\title{Differentially Private Summation with Multi-Message Shuffling}
\maketitle

\begin{abstract}
In recent work, Cheu et al. (Eurocrypt 2019) proposed a  protocol for $n$-party real summation in the shuffle model of differential privacy
with $O_{\epsilon, \delta}(1)$ error and $\Theta(\epsilon\sqrt{n})$ one-bit messages per party. In contrast, every local model protocol for real summation must incur error $\Omega(1/\sqrt{n})$, and there exist protocols matching this lower bound which require just one bit of communication per party.
Whether this gap in number of messages is necessary was left open by Cheu et al.
  
In this note we show a protocol with $O_{\epsilon, \delta}(1)$ error and $O_{\epsilon, \delta}(\log(n))$ messages of size $O(\log(n))$. This protocol is based on the work of Ishai et al.\ (FOCS 2006) showing how to implement distributed summation from secure shuffling, and the observation that this allows simulating the Laplace mechanism in the shuffle model.
\end{abstract}

\section*{A remark on concurrent work}
Concurrently and independently of our work, Ghazi et al.~\cite{DBLP:journals/corr/abs-1906-08320} obtained a similar algorithm to ours. However, there are two differences with our work that are worth stating. First, they rediscovered the algorithm from Ishai et al~\cite{ikos} to obtain secure summation from secure shuffling, and proved slightly different guarantees for it using different techniques. The second difference is in the employed distributed noise aggregation scheme: while Ghazi et al. rely on a similar technique to the one used by Shi et al.~\cite{shi}, we exploit the infinite divisibility properties of the geometric distribution, as suggested by Goryczka and Xiong~\cite{GX}. Quantitatively, these differences lead to improvements over the result of Ghazi et al.: the error of our protocol is independent of $\delta$, saving a factor of $O(\sqrt{\log(1/\delta)})$, and the communication complexity is independent of $\epsilon$ saving a factor of $O(\log(1/\epsilon))$ in the number of messages.

\section{Preliminaries}\label{sec:prelims}

\paragraph{The shuffle model.} The \emph{shuffle model} of differential privacy~\cite{erlingsson2019amplification, DBLP:journals/corr/abs-1808-01394} considers a data collector that receives messages from $n$ users (possibly multiple messages from each user). The shuffle model assumes that a mechanism is in place to provide anonymity to each of the messages, i.e.,  in the curator's view, the message have been shuffled by a random unknown permutation.

Following the notation in \cite{DBLP:journals/corr/abs-1808-01394}, we define a protocol $\mech{P}$ in the shuffle model to be a pair of algorithms $\mech{P} = (\mech{R}, \mech{A})$, where $\mech{R}: \dom{X} \to \dom{Y}^k$, and $\mech{A}: \dom{Y}^{nk} \to \dom{O}$, for number of users $n < 1$ and number of messages $k > 1$. We call $\mech{R}$ the \emph{local randomizer}, $\dom{Y}$ the \emph{message space} of the protocol, $\mech{A}$ the \emph{analyzer} of $\mech{P}$, and $\dom{O}$ the \emph{output space}.
The overall protocol implements a mechanism $\mech{P} : \dom{X}^n \to \dom{O}$ as follows.
Each user $i$ holds a data record $x_i$, to which she applies the local randomizer to obtain a vector of messages $\vec{y}_i = \mech{R}(x_i)$.
The multiset union of all $nk$ messages $y_{i,j}$ is then shuffled and submitted to the analyzer. We write $\mech{S}(\vec{y}_1,\ldots,\vec{y}_n)$ to denote the random shuffling step, where $\mech{S} : \dom{Y}^{nk} \to \dom{Y}^{nk}$ is a \emph{shuffler} that applies a random permutation to its inputs.
In summary, the output of $\mech{P}(x_1, \ldots, x_n)$ is given by $\mech{A} \circ \mech{S} \circ \mech{R}^n(\tup{x}) = \mech{A}(\mech{S}(\mech{R}(x_1), \ldots, \mech{R}(x_n)))$.

To prove privacy we will refer to the mechanism $\mech{M}_\mech{R}=\mech{S}\circ\mech{R}^n$ which captures the view of the analyzer in an execution of the protocol. Therefore we say that $\mech{P}$ is $(\epsilon,\delta)$-differentially private if for every pair of $n$-tuples of inputs $\vec{x}$ and $\vec{x}'$ differing in one co-ordinate, and every collection $T$ of multisets of $\dom{Y}$ of size $nk$, i.e. every possible subset of views of the analyzer, we have 
\begin{equation*}
  \PP(\mech{M}_{\mech{R}}(\vec{x}) \in T)\leq e^\epsilon\PP(\mech{M}_{\mech{R}}(\vec{x}') \in T)+\delta.
\end{equation*}

\paragraph{Real summation.} In this paper we are concerned with the problem of real summation where each $x_i$ is a real number in $[0,1]$ and the goal of the protocol is for the analyser to obtain a differentially private estimate of $\sum_{i=1}^n{x_i}$.

\paragraph{Randomized rounding.} Our proposed protocol uses a fixed point encoding of a real number $x$ with integer precision $p>0$ and randomized rounding, which we define
as $\fp(x, p) = \lfloor xp \rfloor + \bernoulli(xp - \lfloor xp \rfloor)$.

\begin{lemma}
For any  $\vec{x}\in\mathbb{R}^n$, $\mse(\sum_{i=1}^n \fp(x_i, p)/p, \sum_{i=1}^n x_i) \leq n/(4p^2)$.
\end{lemma}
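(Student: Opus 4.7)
The plan is to decompose each term $\fp(x_i,p)/p$ into its deterministic and random parts, show it is an unbiased estimator of $x_i$, so that MSE equals variance, and then bound the variance of a sum of independent Bernoullis.

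First I would fix $i$ and write $q_i = x_i p - \lfloor x_i p \rfloor \in [0,1)$, so that
\[
\fp(x_i,p)/p \;=\; \frac{\lfloor x_i p\rfloor}{p} + \frac{1}{p}\,\bernoulli(q_i).
\]
Taking expectations term-by-term and using $\Ex[\bernoulli(q_i)] = q_i$, the right-hand side has expectation $\lfloor x_i p\rfloor/p + q_i/p = x_i$. Summing over $i$ gives $\Ex\bigl[\sum_i \fp(x_i,p)/p\bigr] = \sum_i x_i$, so the estimator is unbiased and hence $\mse = \Var$.

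Next I would compute the variance. Since only the Bernoulli part is random and $\Var(\bernoulli(q_i)) = q_i(1-q_i) \le 1/4$ (the maximum of $q(1-q)$ on $[0,1]$ being attained at $q=1/2$), we get $\Var(\fp(x_i,p)/p) \le 1/(4p^2)$. Assuming the local randomizers of different users draw independent Bernoulli coins, the variances add, giving $\Var\bigl(\sum_{i=1}^n \fp(x_i,p)/p\bigr) \le n/(4p^2)$, which is the claimed bound.

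There is no real obstacle here; the only subtlety to flag is the independence assumption across users (implicit in the protocol description) used when adding variances, and the fact that $q_i \in [0,1)$ is well-defined for arbitrary $x_i \in \R$, so the statement indeed holds for all of $\R^n$ and not only $[0,1]^n$.
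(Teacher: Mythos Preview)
Your proof is correct and follows essentially the same approach as the paper: both establish that each $\fp(x_i,p)/p$ is an unbiased estimator of $x_i$, use independence across users to reduce the MSE to a sum of individual second moments/variances, and bound each term by $1/(4p^2)$ via the Bernoulli variance bound $q_i(1-q_i)\le 1/4$. Your write-up is in fact slightly more explicit than the paper's about why the per-term bound is $1/(4p^2)$ rather than $1/p^2$.
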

\begin{proof}
  Let  $\Delta_i$ be $\fp(x_i, p)/p - x_i$, and note that $|\Delta_i|\leq 1/p$ and $\EE(\Delta_i)=0$. It follows that
  \begin{align*}
    \mse&\left(\sum_{i=1}^n \fp(x_i, p)/p, \sum_{i=1}^n x_i\right) = \EE\left[\left(\sum_{i=1}^n \Delta_i\right)^2\right] = \\
    &\sum_{i=1}^n \EE[\Delta_i^2] + \sum\limits_{1\leq i < j\leq n}(\EE[\Delta_i\Delta_j]) = \sum_{i=1}^n \EE[\Delta_i^2] \leq n/(4p^2).
  \end{align*}
\end{proof}

\paragraph{Differential Privacy from Statistical Distance.}
Our argument relies on statistical distance which, for consistency with~\cite{ikos}, we define as the maximal advantage of a distinguisher $A$ in telling two distributions $X$ and $Y$ apart, namely $SD(X, Y) = \max_A |\PP(A(X) = 1) - \PP(A(Y) = 1)|$. 
We will show that the view of the analyzer in our protocol is close in statistical distance to the output of a differentially private mechanism. The following lemma (also stated by Wang et al.~\cite{DBLP:conf/icml/WangFS15}, Proposition $3$) says that this suffices to conclude that our protocol is differentially private.

\begin{lemma}
\label{lemma:SD2delta}
Let $\mech{M}:\dom{I} \to \dom{O}$ and $\mech{M'}:\dom{I} \to \dom{O}$ be protocols such that
$SD(\mech{M}(i), \mech{M}'(i))\leq \mu(\sigma)$, for a security parameter $\sigma$ and all inputs $i\in\dom{I}$.
If $\mech{M}$ is $(\epsilon, \delta)$-DP, then
$\mech{M}'$ is $(\epsilon, \delta + (1+e^\epsilon)\mu(\sigma))$-DP. 
\end{lemma}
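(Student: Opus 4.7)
The plan is to fix arbitrary neighboring inputs $i, i' \in \dom{I}$ and an arbitrary measurable event $T \subseteq \dom{O}$, and then chain three inequalities: one replacing $\mech{M}'(i)$ by $\mech{M}(i)$ using the statistical distance bound, one applying $(\epsilon,\delta)$-DP of $\mech{M}$ to pass from $\mech{M}(i)$ to $\mech{M}(i')$, and one replacing $\mech{M}(i')$ by $\mech{M}'(i')$ again using the statistical distance bound.

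The first step I would take is to observe that the distinguisher-based definition of $SD$ implies the usual event-wise bound: taking the distinguisher $A$ to be the indicator of $T$ (which is a legitimate distinguisher) gives
\begin{equation*}
  |\PP(\mech{M}(i) \in T) - \PP(\mech{M}'(i) \in T)| \leq SD(\mech{M}(i),\mech{M}'(i)) \leq \mu(\sigma),
\end{equation*}
and similarly for $i'$. This is the only place where the precise form of the $SD$ definition matters, and I consider it the cleanest way to connect the distinguisher formulation with the event-based formulation used in the definition of differential privacy.

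Then I would simply chain:
\begin{equation*}
  \PP(\mech{M}'(i) \in T) \leq \PP(\mech{M}(i) \in T) + \mu(\sigma) \leq e^\epsilon \PP(\mech{M}(i') \in T) + \delta + \mu(\sigma),
\end{equation*}
where the second inequality uses that $\mech{M}$ is $(\epsilon,\delta)$-DP. Applying the $SD$ bound again at $i'$ replaces $\PP(\mech{M}(i') \in T)$ by $\PP(\mech{M}'(i') \in T) + \mu(\sigma)$, and the extra term picks up a factor of $e^\epsilon$, yielding
\begin{equation*}
  \PP(\mech{M}'(i) \in T) \leq e^\epsilon \PP(\mech{M}'(i') \in T) + \delta + (1 + e^\epsilon)\mu(\sigma),
\end{equation*}
which is exactly the desired $(\epsilon, \delta + (1+e^\epsilon)\mu(\sigma))$-DP guarantee.

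There is no real obstacle here: the main subtlety is merely to confirm that the max over distinguishers $A$ dominates the indicator of any measurable event $T$, so that the statistical distance bound translates into a uniform event-wise bound on probabilities. Once that observation is in place the proof is a three-line triangle-inequality-style chain, and no additional machinery is needed.
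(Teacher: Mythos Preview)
Your proposal is correct and follows essentially the same three-step chain as the paper's proof: bound $\PP(\mech{M}'(i)\in T)$ by $\PP(\mech{M}(i)\in T)+\mu(\sigma)$, apply $(\epsilon,\delta)$-DP of $\mech{M}$, then use the $SD$ bound again at $i'$ to pick up the $e^\epsilon\mu(\sigma)$ term. Your explicit remark that the indicator of $T$ is a valid distinguisher (connecting the distinguisher-based $SD$ definition to the event-wise bound) is a small clarification the paper leaves implicit, but otherwise the arguments are identical.
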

\begin{proof}
  For any neighboring inputs $i,i'\in\dom{I}$, $\mech{M}$ satisfies $\PP(\mech{M}(i)\in O) \leq e^\epsilon\PP(\mech{M}(i')\in O) + \delta$, 
  and $\mech{P'}$ satisfies $\PP(\mech{M}'(i) \in O)  \in [\PP(\mech{M}(i)\in O)-\mu(\sigma),\PP(\mech{M}(i)\in O)+\mu(\sigma)]$, for any input $i$ and
  $O\subseteq \dom{O}$. It follows that $\PP(\mech{M}'(i)\in O) \leq \PP(\mech{M}(i)\in O) + \mu(\sigma) \leq e^\epsilon\PP(\mech{M}(i')\in O) + \delta + \mu(\sigma) \leq e^\epsilon(\PP(\mech{M}'(i')\in O) + \mu(\sigma)) + \delta + \mu(\sigma)$.
\end{proof}

\paragraph{The Discrete Laplace}

In this work we use a discrete version of the Laplace mechanism, which consists of adding a discrete random variable to the input. We refer to this distribution as the {\em discrete Laplace} distribution. The distribution is over $\mathbb{Z}$, we write it $\discretelaplace(\alpha)$ and it has probability mass function proportional to $\alpha^{|x|}$. Adding noise from this distribution to a function with sensitivity $\Delta$ provides $\epsilon$-differential privacy with $\epsilon=\Delta\log(1/\alpha)$, analogously to the Laplace mechanism on $\mathbb{R}$. This distribution also appeared in \cite{shi} though under the name {\em symmetric geometric}.

\section{Secure Distributed Summation}

Ishai et al.~\cite{ikos} showed how to use anonymous communications as a building block for a variety of tasks,
including securely computing $n$-party summation over $\mathbb{Z}_q$.
This setting coincides with the shuffle model presented above, and hence the precise result by Ishai et al.\ can be restated as follows
(we give a detailed proof of this Lemma in Section~\ref{sec:IKOS}).

Let $\mech{P}$ be a shuffle model protocol, and let
$f:\dom{I}\rightarrow\dom{O}$ be a function.
We say that \emph{$P$ is $\sigma$-secure for computing $f$} if,
for any $i,j\in \dom{I}$ such that $f(i)=f(j)$, we have
$$\SD(\mech{M}_\mech{R}(i),\mech{M}_\mech{R}(j))\leq 2^{-\sigma}.$$

\begin{lemma}[\cite{ikos}]
\label{lemma:IKOS}
  There exists a $\sigma$-secure protocol $\Pi$ in the shuffle model for summation in $\mathbb{Z}_q$ with communication $O(\log(q)(\log(qn)+\sigma))$ per party.
\end{lemma}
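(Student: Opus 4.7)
The plan is to analyze the IKOS \emph{split and mix} protocol: each party $p \in [n]$ samples $y_{p,1},\ldots,y_{p,k-1}$ uniformly and independently from $\mathbb{Z}_q$, sets $y_{p,k} = x_p - \sum_{l<k} y_{p,l} \pmod q$, and submits the $k$ shares as its messages. The analyzer sums the $nk$ shuffled messages modulo $q$ and outputs the result. Correctness is immediate by linearity, and the per-party communication cost is $k \log q$ bits; the work lies in the $\sigma$-security bound.

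Fix two inputs $\vec{x}, \vec{x}' \in \mathbb{Z}_q^n$ with common sum $S = \sum_p x_p = \sum_p x'_p$. I would introduce a canonical ``idealized'' distribution $U_S$ obtained by drawing a uniformly random tuple in $\mathbb{Z}_q^{nk}$ conditional on its entries summing to $S$ and returning the induced multiset. Because $U_S$ depends only on $S$, it suffices to show $\SD(\mech{M}_\mech{R}(\vec{x}), U_S) \leq 2^{-\sigma-1}$ for every $\vec{x}$ with $\sum_p x_p = S$; the triangle inequality then yields $\SD(\mech{M}_\mech{R}(\vec{x}), \mech{M}_\mech{R}(\vec{x}')) \leq 2^{-\sigma}$.

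For a multiset $M$ of size $nk$ summing to $S$, one computes $U_S(M) = \binom{nk}{m_0,\ldots,m_{q-1}} q^{-(nk-1)}$ and $P_{\vec{x}}(M) = q^{-n(k-1)} N_{\vec{x}}(M)$, where $N_{\vec{x}}(M)$ counts orderings of $M$ whose $n$ successive $k$-blocks have prescribed sums $x_1,\ldots,x_n$. Letting $\pi$ be a uniformly random ordering of $M$ and $S_p(\pi)$ the sum of its $p$-th $k$-block, the ratio simplifies to
\[
\frac{P_{\vec{x}}(M)}{U_S(M)} = q^{n-1} \cdot \PP_{\pi}\!\left[S_1(\pi) = x_1, \ldots, S_{n-1}(\pi) = x_{n-1}\right],
\]
since $S_n(\pi) = x_n$ is forced by the global sum constraint. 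A Fourier expansion on $\mathbb{Z}_q^{n-1}$ rewrites this probability as $q^{-(n-1)}$ (from the principal character) plus error terms controlled by character sums $\widehat{m}(\chi) = \sum_v (m_v/nk)\chi(v)$ for non-principal characters $\chi$ of $\mathbb{Z}_q$. For a ``typical'' $M$, each $|\widehat{m}(\chi)|$ is $O((nk)^{-1/2})$ by concentration, making the relevant products geometrically small in $k$.

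The main obstacle is making this character-sum bound uniform while controlling the contribution of atypical multisets where some $|\widehat{m}(\chi)|$ is abnormally large. A Hoeffding-style large-deviation estimate for $\widehat{m}(\chi)$ (treating the entries of $M$ under $U_S$ as a near-i.i.d.\ sample) together with a union bound over the $q-1$ non-trivial characters bounds the mass of atypical multisets under both $U_S$ and $P_{\vec{x}}$ by $2^{-\sigma-2}$ provided $k = \Omega(\sigma + \log q)$, while the contribution of typical multisets to the SD is bounded by $q^{n-1}$ times the character-sum product, which drops below $2^{-\sigma-2}$ once $k = \Omega(\log(qn) + \sigma)$. Together these yield the claimed $\sigma$-security bound, and the communication complexity $O(\log(q)(\log(qn)+\sigma))$ per party follows directly from $k = O(\log(qn)+\sigma)$ and the $\log q$-bit size of each share.
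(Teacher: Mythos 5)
Your proposal takes a genuinely different route from the paper. The paper follows IKOS: it first treats the \emph{two-party} case via the Leftover Hash Lemma, viewing the map sending a $k$-subset $\pi \subset [2k]$ to $\sum_{i\in\pi} a_i$ as a universal hash family indexed by $a \in \mathbb{Z}_q^{2k}$, and then lifts to $n$ parties by a hybrid through at most $n-1$ ``basic steps,'' each modifying two coordinates while preserving the sum. The hybrid costs an additive $2\log(n-1)$ in the number of messages $k$. Your plan instead compares the analyzer's view on each input directly to a single reference distribution $U_S$ and runs Fourier analysis over $\mathbb{Z}_q^{n-1}$; if correct, it would be more direct and could in principle avoid the hybrid loss.

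There is, however, a genuine gap in the Fourier step. You implicitly assert that under a uniformly random ordering $\pi$ of a fixed multiset $M$, the error term $\Ex_\pi\bigl[\prod_{p=1}^{n-1}\chi_p(S_p(\pi))\bigr]$ is controlled by products of the one-dimensional character sums $\widehat m(\chi_p)$. That factorization would be exact if the $nk$ slots were filled i.i.d.\ from the empirical distribution of $M$, but a random ordering of a fixed multiset is sampling \emph{without replacement}, so the blocks $S_1,\ldots,S_{n-1}$ are jointly dependent. Rewriting $\prod_p\chi_p(S_p) = \prod_{j=1}^{nk}\chi_{\mathrm{block}(j)}(\pi(j))$ shows that this expectation is a normalized permanent, not a product, and bounding it requires a quantitative with-/without-replacement comparison that the sketch does not provide. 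A second gap is bookkeeping: you union-bound over the $q-1$ nontrivial characters of $\mathbb{Z}_q$ to control each $\widehat m(\chi)$, but the error decomposition sums over all $q^{n-1}-1$ nonprincipal tuples $(\chi_1,\ldots,\chi_{n-1})$, and the prefactor $q^{n-1}$ must also be absorbed; this needs a per-block decay of order $c^{k}$ with $c$ uniformly bounded away from $1$, which is precisely the step that rests on the unjustified factorization. Until those two points are repaired, the claimed $k = O(\log(qn)+\sigma)$ is not established by this argument.
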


The protocol by Ishai et al.\ is very simple. Let $x_i \in \mathbb{Z}_q$ be the input of the $i$th party.
Party $i$ generates $k=2+5\lceil\log(q)\rceil+\lceil 2\sigma+2 \log(n-1) \rceil$ additive shares of $x_i$ ($k$ can be reduced by almost a factor of two as explained in section \ref{sec:improving}), i.e., it generates $k-1$ independent uniformly random elements from $\mathbb{Z}_q$ denoted $r_{i,1},...,r_{i,k-1}$ and then computes $r_{i,k}=x_i-\sum_{j=1}^{k-1}r_{i,j}$. Party $i$ then submits each $r_{i,j}$ as a separate message to the shuffler. The shuffler then shuffles all $nk$ messages together and sends them on to the server who adds up all the received messages and finds the result $\sum_{i=1}^nx_i$ as required. This is $\sigma$-secure as stated in the lemma.

\section{Distributed Noise Addition}

Given that a communication efficient protocol for secure {\em exact} integer summation in the shuffle model exists, we would now like to use it for private real summation. Intuitively, this task boils down to defining a local randomiser that takes a private value $x_i\in [0,1]$ and outputs a privatized value $y_i$ in the \emph{discrete} domain $\mathbb{Z}_q$ such that $\sum_{i=1}^n y_i$ is differentially private and can be post-processed to a good approximation of $\sum_{i=1}^n x_i$.

A simple solution is to have a designated party add the noise required in the curator model. This is however not a satisfying solution as it does not withstand collusions and/or dropouts. To address this. Shi et al.~\cite{shi} proposed a solution where each party adds enough noise to provide $\epsilon$-differential privacy in the curator model with probability $\log(1/\delta)/n$, which results in an $(\epsilon,\delta)$-differentially private protocol. However, one can do strictly better: the total noise can be reduced by a factor of $\log(1/\delta)$ if each party adds a discrete random variable such that the sum of the contributions is exactly enough to provide $\epsilon$-differential privacy, and this also results in pure differential privacy.
A discrete random variable with this property is provided in~\cite{GX}, where it is shown that a discrete Laplace random variable can be expressed as the sum of $n$ differences of two P\'olya random variables (the P\'olya distribution is a generalization of the negative binomial distribution). Concretely, if $X_i$ and $Y_i$ are independent P\'olya$(1/n,\alpha)$ random variables then $Z=\sum_{i=1}^n X_i-Y_i$ has a discrete Laplace distribution i.e. $\PP(Z=k)\propto \alpha^{|k|}$. This allows to distribute the Laplace mechanism, which is what we shall do in our protocol presented in the next section.

\section{Private Summation}

In this section we prove a lemma which says that given a secure integer summation protocol we can construct a differentially private real summation protocol. We then combine this lemma with Lemma \ref{lemma:IKOS} to derive a protocol, given explicitly, for differentially private real summation.

\begin{lemma}
  \label{lemma:privatefromsecure}
  Given a $\sigma$-secure protocol $\Pi$ in the shuffle model for $n$-party summation in $\mathbb{Z}_q$, for any $q>0$, with communication $f(q,n,\sigma)$ per party, there exists an $(\epsilon, (1+e^\epsilon)2^{-\sigma-1})$-differentially private protocol in the shuffle model for real summation with standard error $O_\epsilon(1)$ and communication bounded by $f(\lceil 2n^{3/2} \rceil,n,\sigma)$.
\end{lemma}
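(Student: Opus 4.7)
The plan is to realise an ideal discrete-Laplace mechanism on a randomly rounded integer sum and then emulate it with the $\sigma$-secure integer-sum protocol $\Pi$. In the construction, I take $p=\Theta(\sqrt n)$ chosen so that $np$ sits safely below $q/2$ with $q=\lceil 2n^{3/2}\rceil$, and $\alpha=e^{-\epsilon/p}$. Each party $i$ computes $y_i=\fp(x_i,p)$, samples independent $X_i,Y_i\sim\polya(1/n,\alpha)$, and submits $z_i=(y_i+X_i-Y_i)\bmod q$ to $\Pi$; the analyzer centres the returned $s\in\mathbb{Z}_q$ into a signed integer $\hat S\in(-q/2,q/2]$ and outputs $\hat S/p$.

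First I would verify accuracy. The P\'olya differences telescope (by the fact from~\cite{GX}) into a single $Z\sim\discretelaplace(\alpha)$, so $\sum_i z_i\equiv\sum_i y_i+Z\pmod q$. A discrete-Laplace tail bound shows that $|Z|<q/2-np$ except with exponentially small probability, making the wraparound contribution to the MSE negligible. The remaining MSE splits into the rounding term $n/(4p^2)=O(1)$ (from the earlier rounding lemma) and the noise term $\mathrm{Var}(Z)/p^2=O(1/\epsilon^2)$, for a total of $O_\epsilon(1)$.

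Next I would handle privacy of the ideal functionality $\tilde{\mech{M}}(x)=(\sum_i y_i+Z)\bmod q$. The subtlety is the extra randomness inside $\fp$. Conditional on independent data-free bits $U_i$ used to realise the Bernoulli steps, each $y_i$ becomes a deterministic function of $x_i$ taking values in $\{0,\dots,p\}$, so $\sum_i y_i$ has global sensitivity $p$. The choice $\alpha=e^{-\epsilon/p}$ then makes the discrete Laplace mechanism $\epsilon$-DP for every fixed $U$, and averaging over $U$ (which is independent of the data) preserves $\epsilon$-DP. The modular reduction at the end is post-processing.

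Finally, I would bridge to the protocol via $\sigma$-security. Pick a representative tuple $\tau(s)\in\mathbb{Z}_q^n$ with sum $s$ for each $s$, and let $\tilde{\mech{M}}'(x)$ sample $s\sim\tilde{\mech{M}}(x)$ and run $\mech{M}_{\mech{R}}(\tau(s))$; this is a post-processing of $\tilde{\mech{M}}$ and hence still $\epsilon$-DP. For any realisation of the random shares $z$ with $\sum z\equiv s$, $\sigma$-security gives $SD(\mech{M}_{\mech{R}}(z),\mech{M}_{\mech{R}}(\tau(s)))\leq 2^{-\sigma}$, and averaging over $z$ (with a suitable midpoint choice of $\tau$) yields $SD(\mech{M}'(x),\tilde{\mech{M}}'(x))\leq 2^{-\sigma-1}$. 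Lemma~\ref{lemma:SD2delta} then produces the $(\epsilon,(1+e^\epsilon)2^{-\sigma-1})$-DP bound, and the communication $f(\lceil 2n^{3/2}\rceil,n,\sigma)$ is inherited from running $\Pi$ on $\mathbb{Z}_q$-valued inputs. The hard part is calibrating $p$, $q$, and $\alpha$ simultaneously so that rounding error, noise magnitude, and the $\mathbb{Z}_q$ wraparound all cooperate; once those are locked in, the rest is a clean composition of the standard discrete-Laplace DP argument with the $\sigma$-security reduction.
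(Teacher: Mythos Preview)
Your approach is essentially the same as the paper's: the same protocol (fixed-point encoding with $p=\sqrt{n}$, P\'olya-difference noise summing to a discrete Laplace, $q=\lceil 2n^{3/2}\rceil$), the same accuracy decomposition, and the same privacy reduction via a curator-side comparison mechanism followed by Lemma~\ref{lemma:SD2delta}. The paper's comparison mechanism $\mech{M}_C$ feeds $(s,0,\ldots,0)$ into $\Pi$, which is just a specific choice of your representative tuple $\tau(s)$; your conditioning on the uniform bits $U_i$ underlying $\fp$ is a slightly cleaner way to handle the sensitivity step than the paper's one-line claim.

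One small point: your assertion that ``a suitable midpoint choice of $\tau$'' improves the statistical-distance bound from $2^{-\sigma}$ to $2^{-\sigma-1}$ is not justified---$\sigma$-security only gives $\SD\leq 2^{-\sigma}$ between any two equal-sum inputs, and there is no midpoint argument that halves this in general. The paper's own proof in fact only establishes $\SD\leq 2^{-\sigma}$ (yielding $(1+e^\epsilon)2^{-\sigma}$ via Lemma~\ref{lemma:SD2delta}), so the extra factor of $1/2$ in the lemma statement is not accounted for there either; you should simply claim $2^{-\sigma}$ and not try to manufacture the missing half.
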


\begin{proof}
  Let $\Pi$ be $(\mech{R}_\Pi, \mech{A}_\Pi,)$. We will exhibit the resulting protocol $\mech{P} = (\mech{R}, \mech{A})$, with $\mech{R} = \mech{R}_\Pi\circ \tilde{\mech{R}}$ and $\mech{A} = \tilde{\mech{A}}\circ \mech{A}_\Pi$, with $\tilde{\mech{R}}, \tilde{\mech{A}}$ defined as follows. $\mech{P}$ executes $\Pi$ with $q = \lceil 2n^{3/2} \rceil$, and thus $\tilde{\mech{R}}: [0,1] \mapsto \mathbb{Z}_{\lceil 2n^{3/2} \rceil}$. $\tilde{\mech{R}}(x_i)$ is the result of first computing a fixed-point encoding of the input $x$ with precision $p=\sqrt{n}$, and then adding noise $\eta\sim \polya(1/n,e^{-\epsilon/p})-\polya(1/n,e^{-\epsilon/p})$. $\tilde{\mech{A}}$ decodes $z$ by returning $(z - q)/p$ if $z > \frac{3np}{2}$, and $z/p$ otherwise. This addresses a potential underflow of the sum in $\mathbb{Z}_q$. To see that $\mech{P}$ has error $O(1)$, note that it has the accuracy of the discrete Laplace mechanism when adding integers, except when the total noise added has magnitude greater than $n/2$, in which case we may incur additional $O(n)$ error, but this only happens with probability $O(e^{-\epsilon n/2})$. Hence, the error of this protocol is bounded by $O(1) + O(ne^{-\epsilon n/2}) = O(1)$.

To show that this protocol is private we will compare the mechanism $\mech{M}_\mech{R}$ to another mechanism $\mech{M}_C$ (which can be considered to be computed in the curator model) which is $(\epsilon,0)$-differentially private and such that $\SD(\mech{M}_\mech{R}(\vec{x}),\mech{M}_C(\vec{x}))\leq2^{-\sigma}$ for all $\vec{x}$, from which the result follows by Lemma \ref{lemma:SD2delta}.

$\mech{M}_C(\vec{x})$ is defined to be the result of the following procedure. First apply $\tilde{\mech{R}}$ to each input $x_i$, then take the sum $s=\sum_{i=1}^n\tilde{\mech{R}}(x_i)$ and then output the result of $\mech{M}_{\mech{R}_\Pi}$ with first input $s$ and all other inputs $0$.

Note that $s = \sum_{i=1}^n\fp(x_i, p) + \discretelaplace(e^{-\epsilon/p})$, and that the sensitivity of $\sum_{i=1}^n\fp(x_i, p)$ is $p$. It follows that $s$ is $(\epsilon, 0)$-differentially private and thus by the post processing property so is $\mech{M}_C$.

It remains to show that $\SD(\mech{M}_\mech{R}(\vec{x}),\mech{M}_C(\vec{x}))\leq 2^{-\sigma}$, which we will do by demonstrating the existence of a coupling. First let the noise added to input $x_i$ by $\tilde{\mech{R}}$ be the same in both mechanisms and note that this results in the inputs to $\mech{M}_\Pi$ within $\mech{M}_\mech{R}$ and the inputs to $\mech{M}_\Pi$ within $\mech{M}_C$ having the same sum. It then follows immediately from Lemma \ref{lemma:IKOS} that these two instantiations of $\mech{M}_\Pi$ can be coupled to have identical outputs except with probability $2^{-\sigma}$, as required.
\end{proof}

The choice $p=\sqrt{n}$ was made so that the error in the discretization was the same order as the error due to the noise added, this recovers the same order of error as the curator model. Taking $p=\omega(\sqrt{n})$ results in the leading term of the total error matching the curator model at the cost of a small constant factor increase to communication.

\begin{algorithm2e}[t]
  \DontPrintSemicolon
    \SetKwInput{KwPub}{Public Parameters}
  \SetKwComment{Comment}{{\scriptsize$\triangleright$\ }}{}
\caption{Analyzer $\mech{A}_{n,k,p,q}$}\label{algo:agg}
        \KwPub{Number of parties $n$, number of messages per party $k$, precision $p$ and order $q>np$ of the additive group.}
        \KwIn{Multiset $\{y_i\}_{i\in [nk]}$, with $y _i\in \mathbb{Z}$}
        \KwOut{$z \in \mathbb{R}$}
\BlankLine
Let $z \gets \sum_{i=1}^{nk} y_i$ mod $q$ \Comment*[r]{Add all inputs mod $q$}
{\bf if }{$z>\frac{np+q}{2}$}{ \bf then }{$z \gets z-q$} \Comment*[r]{Correct for underflow}
\KwRet{$z/p$} \Comment*[r]{Rescale and return estimate}
\end{algorithm2e}

\begin{algorithm2e}[t]
\label{algo:locrand}
  \DontPrintSemicolon
  \SetKwInput{KwPub}{Public Parameters}
  \SetKwComment{Comment}{{\scriptsize$\triangleright$\ }}{}
\caption{Local Randomizer $\mech{R}_{\alpha,k,p,q}$}\label{algo:lr}
        \KwPub{Noise magnitude $\alpha$, number of messages $k$, precision $p$ and order $q>np$ of the additive group.}
        \KwIn{$x \in [0,1]$}
        \KwOut{$\vec{y} \in [0..q-1]^k$}
\BlankLine
Let $\tilde{x}\gets \lfloor xp \rfloor + \bernoulli(xp-\lfloor xp \rfloor)$ \Comment*[r]{$\tilde{x}$ is the encoding of $x$ with precision $p$}
Let $y \gets \tilde{x}+\polya(1/n,\alpha)-\polya(1/n,\alpha)$ \Comment*[r]{add noise to $\tilde{x}$}
Sample $\vec{y} \gets \uniform([ 0..q-1 ]^k)$ conditioned on $\sum_{i\in [k]}y_i=y$

\KwRet{$\vec{y}$} \Comment*[r]{Submit $k$ additive shares of $y$}
\end{algorithm2e}

Combining Lemmas \ref{lemma:IKOS} and \ref{lemma:privatefromsecure} we can conclude the following theorem.

\begin{theorem}
  There exists an $(\epsilon,\delta)$-differentially private protocol in the shuffle model for real summation with $O_{\epsilon,\delta}(1)$ error and $O_{\epsilon,\delta}(\log(n)^2)$ communication per party.
\end{theorem}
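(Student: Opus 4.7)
The plan is a direct combination of Lemmas \ref{lemma:IKOS} and \ref{lemma:privatefromsecure}. First I would take the IKOS-style protocol $\Pi$ from Lemma \ref{lemma:IKOS}, which for any modulus $q$ and security parameter $\sigma$ provides $\sigma$-secure summation in $\mathbb{Z}_q$ with per-party communication $f(q,n,\sigma) = O(\log(q)(\log(qn)+\sigma))$. Then I would feed $\Pi$ into Lemma \ref{lemma:privatefromsecure}, whose construction internally fixes $q = \lceil 2n^{3/2}\rceil$ and produces a real summation protocol that is $(\epsilon, (1+e^\epsilon)2^{-\sigma-1})$-differentially private, has error $O_\epsilon(1)$, and inherits per-party communication $f(\lceil 2n^{3/2}\rceil, n, \sigma)$.

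To reach the advertised $(\epsilon,\delta)$-privacy guarantee, I would pick $\sigma$ as the smallest integer satisfying $(1+e^\epsilon)2^{-\sigma-1} \leq \delta$, so that $\sigma = O(\epsilon + \log(1/\delta))$, which is $O_{\epsilon,\delta}(1)$. Substituting $q = \lceil 2n^{3/2}\rceil$ and this choice of $\sigma$ into the communication bound gives
\begin{equation*}
  O\bigl(\log(n)\bigl(\log(n) + \epsilon + \log(1/\delta)\bigr)\bigr) = O_{\epsilon,\delta}(\log(n)^2),
\end{equation*}
since $\log(q) = O(\log n)$ and $\log(qn) = O(\log n)$. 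The error bound $O_\epsilon(1) = O_{\epsilon,\delta}(1)$ is inherited verbatim from Lemma \ref{lemma:privatefromsecure}.

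Because both underlying lemmas already do the real work (secure summation in the shuffle model on one side, and the discrete Laplace simulation via P\'olya differences together with the modular-underflow correction on the other), there is no substantive obstacle — the argument reduces to parameter bookkeeping. The only subtlety to check is that, with $\epsilon$ and $\delta$ treated as constants, the $\sigma \log(n)$ contribution is absorbed into $O_{\epsilon,\delta}(\log(n)^2)$ and does not inflate the asymptotic dependence on $n$; this is immediate from the fact that $\sigma$ itself does not depend on $n$.
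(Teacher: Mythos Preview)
Your proposal is correct and follows exactly the approach the paper takes: the paper's proof is literally the single sentence ``Combining Lemmas \ref{lemma:IKOS} and \ref{lemma:privatefromsecure} we can conclude the following theorem,'' followed by an explicit listing of the parameter choices (which match yours). Your parameter bookkeeping for $\sigma$ and the resulting communication bound is accurate and, if anything, spells out more detail than the paper does.
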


Such a protocol can be constructed from the proofs of these lemmas and is given explicitly by taking the local randomiser $\mech{R}_{\alpha,k,p,q}$ given in algorithm \ref{algo:locrand}, and the analyzer $\mech{A}_{n,k,p,q}$ given in algorithm \ref{algo:agg}, with parameters $p=\sqrt{n}$, $q=\lceil 2np \rceil$, $\alpha=e^{-\epsilon/p}$ and $k=2+5\lceil\log(q)\rceil+2\lceil \log(1/\delta)+ \log(n-1) \rceil$. This results in a mean squared error of
\begin{equation*}
  \frac{2\alpha}{(1-\alpha)^2}+\frac{n}{4p^2}
\end{equation*}
and communication of $k\lceil\log(q)\rceil$ bits per party. In section \ref{sec:improving} we explain how the choice of $k$ and thus the required communication can actually be reduced by almost a factor of two.

\section{Summation by Anonymity}
\label{sec:IKOS}

In this section we provide a proof of Lemma~\ref{lemma:IKOS}, all the ideas for the proof are provided in \cite{ikos} but we reproduce the proof here keeping track of constants to facilitate setting parameters of the protocol. The following definition and lemma from \cite{IZ89} are fundamental to why this protocol is secure.

Let $H$ be a family of functions mapping $\{0, 1\}^n$ to
$\{0, 1\}^l$ . We say $H$ is universal or a universal family of
hash functions if, for $h$ selected uniformly at random from $H$, for every $x, y \in \{0, 1\}^n$, $x\neq y$,
\begin{equation*}
  \PP(h(x) = h(y))=2^{-l}.
\end{equation*}

\begin{lemma}[Leftover Hash Lemma (special case)]
  Let $D\subset \{0, 1\}^n$, $s > 0$, $|D| \geq 2^{l+2s}$ and let $H$ be a universal family of hash functions mapping $n$ bits to $l$ bits. Let $h$, $d$ and $U$ be chosen independently uniformly at random from $H$, $D$ and $\{0,1\}^{l}$ respectively. Then 
\begin{equation*}
\SD\left( (h, h(d)), (h,U) \right)\leq 2^{-s}
\end{equation*}
\end{lemma}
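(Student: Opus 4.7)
The plan is to bound the statistical distance by the $\ell_2$ distance (equivalently, by the collision probability) of the joint distribution of $(h, h(d))$, and then exploit universality to control collisions. This is the textbook route for the Leftover Hash Lemma.

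First I would record the standard reduction: for any distribution $P$ on a finite set $S$ of size $N$, with $U$ uniform on $S$, Cauchy–Schwarz gives
\begin{equation*}
\SD(P, U) \;=\; \tfrac{1}{2}\sum_{x \in S}\bigl|P(x)-\tfrac{1}{N}\bigr| \;\leq\; \tfrac{1}{2}\sqrt{N\sum_{x\in S}\bigl(P(x)-\tfrac{1}{N}\bigr)^{2}} \;=\; \tfrac{1}{2}\sqrt{N\cdot\mathrm{CP}(P)-1},
\end{equation*}
where $\mathrm{CP}(P)=\sum_{x}P(x)^{2}$ is the collision probability of $P$. I will apply this with $S=H\times\{0,1\}^{l}$, so $N=|H|\cdot 2^{l}$, and $P$ equal to the distribution of $(h,h(d))$. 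Note that the distinguishing-advantage form of $\SD$ used in the paper agrees with $\tfrac{1}{2}\|P-Q\|_{1}$, so this bound is exactly what is needed.

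Second I would compute $\mathrm{CP}$ of $(h,h(d))$ by imagining two independent copies $(h_{1},h_{1}(d_{1}))$ and $(h_{2},h_{2}(d_{2}))$ and asking when they are equal. The $h$-coordinates match with probability $1/|H|$. Conditional on $h_{1}=h_{2}=h$, the second coordinate matches iff $h(d_{1})=h(d_{2})$: with probability $1/|D|$ we have $d_{1}=d_{2}$ and this is automatic, and with probability $1-1/|D|$ we have $d_{1}\neq d_{2}$, in which case universality (applied after conditioning on the pair $(d_{1},d_{2})$ and averaging over $h$) gives a collision probability of exactly $2^{-l}$. Hence
\begin{equation*}
\mathrm{CP}(P) \;=\; \frac{1}{|H|}\left(\frac{1}{|D|}+\Bigl(1-\frac{1}{|D|}\Bigr)2^{-l}\right) \;\leq\; \frac{1}{|H|}\left(\frac{1}{|D|}+2^{-l}\right).
\end{equation*}

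Third, I would plug in and use the size hypothesis $|D|\geq 2^{l+2s}$:
\begin{equation*}
N\cdot\mathrm{CP}(P)-1 \;\leq\; |H|\cdot 2^{l}\cdot\frac{1}{|H|}\Bigl(\frac{1}{|D|}+2^{-l}\Bigr)-1 \;=\; \frac{2^{l}}{|D|} \;\leq\; 2^{-2s},
\end{equation*}
so $\SD((h,h(d)),(h,U))\leq \tfrac{1}{2}\sqrt{2^{-2s}}=\tfrac{1}{2}\cdot 2^{-s}\leq 2^{-s}$, finishing the proof.

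The only delicate step is the conditional collision calculation: one must be careful that universality yields probability exactly $2^{-l}$ only for a fixed pair $x\neq y$, so the argument averages over the independent choice of $(d_{1},d_{2})$ first (splitting on whether they are equal) and then applies universality pointwise to each unequal pair before recombining. Everything else is a direct substitution into the collision-probability-to-statistical-distance inequality.
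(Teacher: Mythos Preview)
Your argument is correct and is the standard collision-probability proof of the Leftover Hash Lemma. Note, however, that the paper does not supply its own proof of this statement: the lemma is quoted from \cite{IZ89} and used as a black box in the subsequent analysis, so there is no in-paper argument to compare against. Your write-up would serve as a self-contained substitute for that citation; in fact you obtain the slightly sharper bound $\tfrac{1}{2}\cdot 2^{-s}$, which the stated form simply weakens to $2^{-s}$.
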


To begin with we consider the case of securely adding two uniformly random inputs $X,Y\in \mathbb{Z}_q$. Recall that $\Pi$ is the protocol of the statement of the lemma, and let $V(x,y)$ be shorthand for $\mech{M}_{\mech{R}_\Pi}((x,y)) = \mech{S}\circ\mech{R}_\Pi((x,y))$, i.e. the view of the analyzer in an execution of protocol $\Pi$ with inputs $x,y$. We write $V$ for $V(X,Y)$ and $V(x)$ for $V(x,Y)$. Finally let $U$ be an independent uniformly random element of $\mathbb{Z}_q$.

\begin{lemma}
  Suppose $\log\binom{2k}{k}\geq \lceil \log(q) \rceil +2s$. Then, $\SD((V,X),(V,U))\leq 2^{-s}$.
\end{lemma}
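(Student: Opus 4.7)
The plan is to apply the Leftover Hash Lemma, after recasting the joint distribution of $(V,X)$ in a form that cleanly separates a ``data'' part from a ``hash key'' part. First I would observe that since $X,Y$ are independent uniform on $\mathbb{Z}_q$, party~$1$'s shares $\vec{r}_1\in\mathbb{Z}_q^k$ (uniform conditional on summing to $X$) together with party~$2$'s shares $\vec{r}_2\in\mathbb{Z}_q^k$ form a concatenation $\vec{v}=(\vec{r}_1,\vec{r}_2)$ that is uniform in $\mathbb{Z}_q^{2k}$. The view $V$ is a uniform random permutation of $\vec{v}$. I would then introduce the auxiliary random variable $T\subset\{1,\ldots,2k\}$ recording the positions in $V$ that originated from party~$1$. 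A short counting argument over $(\vec{v},\sigma)$ shows that $(V,T)$ are independent, with $V$ uniform in $\mathbb{Z}_q^{2k}$ and $T$ uniform over $k$-subsets of $\{1,\ldots,2k\}$, and that $X=\sum_{i\in T}V[i]=\vec{b}_T\cdot V \bmod q$, where $\vec{b}_T\in\{0,1\}^{2k}$ is the indicator vector of $T$.

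Second, I would instantiate the Leftover Hash Lemma with the data $d=\vec{b}_T$ drawn uniformly from $D=\{\vec{b}\in\{0,1\}^{2k}:|\vec{b}|=k\}$, so that $|D|=\binom{2k}{k}$, with the hash key $V$ drawn uniformly from $\mathbb{Z}_q^{2k}$, and with the hash function $h_V(\vec{b})=\vec{b}\cdot V \bmod q$. Universality of this family is immediate: for $\vec{b}_1\neq\vec{b}_2$ the difference $\vec{c}$ is nonzero with entries in $\{-1,0,1\}$, so picking any coordinate $i$ with $c_i\in\{\pm 1\}$ writes $\vec{c}\cdot V=c_iV_i+R$ with $V_i$ independent of $R$, giving collision probability exactly $1/q$. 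The LHL then delivers $\SD((V,\vec{b}_T\cdot V),(V,U))\leq 2^{-s}$ whenever $\log\binom{2k}{k}\geq \lceil\log q\rceil+2s$, which is precisely the hypothesis.

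The step I expect to be the main obstacle is choosing the right roles for $V$ and $T$ in the LHL. The superficially natural choice is to treat $T$ as the hash key applied to the input $V$, but that family --- ``sum of some $k$ of the $2k$ entries of a fixed vector'' --- is not universal, as already seen by taking the difference of two subsets that differ in a single element. Swapping the roles so that $V$ plays the random hash key and $\vec{b}_T$ encodes the data reduces universality to the uniformity and independence of the coordinates of $V$, which is exactly where the structure of the protocol's shares pays off. Once this assignment is fixed, the remainder of the argument is routine verification of the LHL hypotheses.
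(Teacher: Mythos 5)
Your proposal is correct and follows essentially the same route as the paper: you assign $V$ the role of the LHL hash key and the (indicator of the) set of positions holding party~1's shares the role of the LHL data source $d$, just as the paper does with $h_a(\pi)=\sum_{i\in\pi}a_i$ for $a\in\mathbb{Z}_q^{2k}$ and $\pi\in\binom{[2k]}{k}$. You spell out two points the paper leaves implicit, namely the independence of $V$ and $T$ and the explicit verification that the family $h_a(\vec{b})=\vec{b}\cdot a$ is universal, but the underlying argument is identical.
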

\begin{proof}
  For $a\in \mathbb{Z}_q^{2k}$ and $\pi \in \binom{[2k]}{k}$ let $h_a(\pi)=\sum_{i\in \pi}a_i$. $(h_a)_{a\in \mathbb{Z}_q^{2k}}$ are a universal family of hash functions from $\binom{[2k]}{k}$ to $\mathbb{Z}_q$. Let $d$ be an independent uniformly random element of $\binom{[2k]}{k}$. Note that $(V,h_V(d))$ has the same distribution as $(V,X)$, which follows from the intuition that $V$ corresponds to $2k$ random numbers shuffled together, and $x$ can be obtained by adding up $k$ of them, and letting $y$ be the sum of the rest.

  The result now follows immediately from the fact that the Leftover Hash Lemma implies that $\SD((V,h_v(d)),(V,U))\leq 2^{-s}$.
\end{proof}

Now we can use this to solve the case of two arbitrary inputs.

\begin{lemma}
  If $x,y,x',y'\in \mathbb{Z}_q$ satisfy $x+y=x'+y'$, then we have
  \begin{equation*}
    \SD(V(x,y),V(x',y'))\leq 2q^2\SD((V,X),(V,U)).
  \end{equation*}
\end{lemma}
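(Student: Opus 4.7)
The plan is to insert an intermediate distribution between $V(x,y)$ and $V(x',y')$, namely the distribution $V_s$ obtained by conditioning the joint view on the sum of inputs equaling $s=x+y=x'+y'$. If I can show $\SD(V(x,y),V_s)\le q^2\SD((V,X),(V,U))$ for every admissible pair, the triangle inequality will immediately yield the claimed factor of $2q^2$.

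First I would promote the hypothesis of the previous lemma (which bounds $\SD((V,X),(V,U))$) to a pointwise bound $\SD(V(x),V)\le q\cdot\SD((V,X),(V,U))$ for every $x\in\mathbb{Z}_q$. This is just an unpacking of the definition: writing $\Pr[(V,X)=(v,x)] = \tfrac{1}{q}\Pr[V(x)=v]$ and $\Pr[(V,U)=(v,x)] = \tfrac{1}{q}\Pr[V=v]$, I get
\begin{equation*}
\SD((V,X),(V,U)) \;=\; \frac{1}{2q}\sum_{x,v}\bigl|\Pr[V(x)=v]-\Pr[V=v]\bigr| \;=\; \frac{1}{q}\sum_{x}\SD(V(x),V),
\end{equation*}
and non-negativity of each summand gives the pointwise bound for every individual $x$.

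Next I would zoom in on views whose total sum equals $s$. The shuffled view $v$ consists of all $2k$ shares, so its total (mod $q$) is deterministically the sum of the inputs. Hence for any $v$ with $\sum v = s$ we have $\Pr[V(x,s-x)=v] = q\Pr[V(x)=v]$, because the event $\{Y=s-x\}$ inside $V(x)=V(x,Y)$ has probability $1/q$ and is the only value of $Y$ producing views with total $s$; by an identical computation, $\Pr[V_s=v] = q\Pr[V=v]$. Both $V(x,s-x)$ and $V_s$ are supported on views with total $s$, so
\begin{equation*}
\SD(V(x,s-x),V_s) \;=\; \frac{q}{2}\sum_{v:\,\sum v = s}\bigl|\Pr[V(x)=v]-\Pr[V=v]\bigr| \;\le\; q\cdot\SD(V(x),V) \;\le\; q^2\cdot\SD((V,X),(V,U)).
\end{equation*}

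Applying this bound once to $(x,y)$ and once to $(x',y')$ and combining by the triangle inequality then delivers the stated inequality. The only subtle step is the second one, where conditioning on a sub-event of probability $1/q$ inflates statistical distances by a factor of $q$; I would want to double-check carefully that the view really does determine the total sum and that the conditioning events in $V(x)$ and $V$ have probability exactly $1/q$, but both are immediate from the structure of the additive-share protocol.
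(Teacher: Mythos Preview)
Your proof is correct and uses essentially the same ingredients as the paper: the averaging step $\SD(V(x),V)\le q\,\SD((V,X),(V,U))$, the observation that the shuffled view determines the input sum (so conditioning on the sum costs a factor of $q$), and the triangle inequality. The only organizational difference is that the paper pivots at the ``one-input-fixed'' level, first bounding $\SD(V(x),V(x'))\le 2q\,\SD((V,X),(V,U))$ and then using the sum-determined decomposition $\SD(V(x),V(x'))=\tfrac{1}{q}\sum_y \SD(V(x,y),V(x',y+x-x'))$ to extract a single term, whereas you pivot directly at the conditioned level through $V_s$; the two routes are interchangeable.
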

\begin{proof}
  Markov's inequality provides that
  \begin{equation*}
    \SD(V(x),V)\leq q\SD((V,X),(V,U)) \hspace{10mm} \forall x\in \mathbb{Z}_q
  \end{equation*}
  and thus by the triangle inequality
  \begin{equation*}
    \SD(V(x),V(x'))\leq 2q\SD((V,X),(V,U)).
  \end{equation*}
  Note that
  \begin{align*}
    \SD(V(x),V(x'))&=\sum_{t\in \mathbb{Z}_q}\SD(V(x)|Y=t-x,V(x')|Y=t-x')/q \\
    &=\frac{1}{q}\sum_{y\in \mathbb{Z}_q} \SD(V(x,y),V(x',y+x-x'))
  \end{align*}
  and so for every $x,y,x'\in \mathbb{Z}_q$ and $y'=y+x-x'$ we have
  \begin{equation*}
    \SD(V(x,y),V(x',y'))\leq q\SD(V(x),V(x')).
  \end{equation*}
  Combining the last two inequalities gives the result.
\end{proof}

Combining these two lemmas gives that, for $x,y,x',y'\in \mathbb{Z}_q$ such that $x+y=x'+y'$,
\begin{align}
  \SD(V(x,y),V(x',y'))&\leq 2q^22^{-\frac{\log\binom{2k}{k}-\lceil \log(q) \rceil}{2}}\nonumber \\
  \label{al:binombound}
                     &\leq 2^{-\frac{k}{2}+1+\frac{5\lceil \log(q) \rceil}{2}}
\end{align}

From which the following lemma is immediate
\begin{lemma}
  \label{lemma:twocase}
  If $k\geq 2+5\lceil\log(q)\rceil+2\sigma$ and $x,y,x',y'\in \mathbb{Z}_q$ such that $x+y=x'+y'$ then 
  \begin{equation*}
    \SD(V(x,y),V(x',y'))\leq 2^{-\sigma}
  \end{equation*}
\end{lemma}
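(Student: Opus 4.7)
My plan is to note that the lemma is an immediate consequence of the already-established inequality (\ref{al:binombound}), so the entire proof reduces to a single arithmetic substitution.

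First, I would invoke (\ref{al:binombound}), which asserts that whenever $x+y = x'+y'$ in $\mathbb{Z}_q$, we have
\[
\SD(V(x,y),V(x',y')) \leq 2^{-k/2 + 1 + 5\lceil\log(q)\rceil/2}.
\]
Second, I would substitute the assumed lower bound $k \geq 2 + 5\lceil\log(q)\rceil + 2\sigma$ into the exponent to get
\[
-k/2 + 1 + 5\lceil\log(q)\rceil/2 \;\leq\; -(2 + 5\lceil\log(q)\rceil + 2\sigma)/2 + 1 + 5\lceil\log(q)\rceil/2 \;=\; -\sigma,
\]
from which the bound $\SD(V(x,y),V(x',y')) \leq 2^{-\sigma}$ follows directly.

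I anticipate no real obstacle for this particular step: the heavy lifting was done in the two preceding lemmas, namely the leftover-hash-lemma argument showing that $(V,X)$ is statistically close to $(V,U)$ when $X$ is uniform in $\mathbb{Z}_q$, followed by the averaging/triangle-inequality reduction which lifts that uniform-input statement to arbitrary same-sum inputs $(x,y)$ and $(x',y')$. The present lemma is simply a repackaging of (\ref{al:binombound}) in a form where the value of $k$ has been calibrated so that the statistical distance bound becomes $2^{-\sigma}$, which is the form needed when this two-input indistinguishability is later combined via a hybrid argument to give the full $n$-party guarantee in Lemma \ref{lemma:IKOS}.
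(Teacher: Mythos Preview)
Your proposal is correct and matches the paper's approach exactly: the paper simply states that the lemma is ``immediate'' from inequality (\ref{al:binombound}), and your substitution of the assumed lower bound on $k$ into the exponent is precisely the intended one-line verification.
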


We will now generalize to the case of $n$-party summation.

\begin{proof}[Proof of Lemma \ref{lemma:IKOS}]
Let $\vec{x},\vec{x}'\in \mathbb{Z}_q^n$ be two distinct possible inputs to the protocol, we say that they are related by a basic step if they have the same sum and only differ in two entries. It is evident that any two distinct inputs with the same sum are related by at most $n-1$ basic steps. We will show that if $k$ is taken to be $2+5\lceil\log(q)\rceil+\lceil 2\sigma+ 2\log(n-1) \rceil$ and $\vec{x}$ and $\vec{x}'$ are related by a basic step then
\begin{equation}
  \label{eq:basicbound}
  \SD(V(\vec{x}),V(\vec{x}'))\leq \frac{2^{-\sigma}}{n-1}
\end{equation}
from which the lemma follows by the triangle inequality for statistical distance.

Let $\vec{x}$ and $\vec{x}'$ be related by a basic step and suppose w.l.o.g. that $\vec{x}$ and $\vec{x}'$ differ in the first two co-ordinates. Taking $k=2+5\lceil\log(q)\rceil+\lceil 2\sigma+2 \log(n-1) \rceil$, by lemma \ref{lemma:twocase}, we can couple the values sent by the first two parties on input $\vec{x}$ with the values they send on input $\vec{x}'$ so that they match with probability $1-\frac{2^{-\sigma}}{n-1}$. Independently of that we can couple the inputs of the other $n-2$ parties so that they always match as they each have the same input in both cases. This gives a coupling exhibiting that equation \ref{eq:basicbound} holds.
\end{proof}

\begin{remark}
  It may seem counter intuitive to require more messages the more parties there are (for fixed $q$). The addition of the $\log(n-1)$ term to $k$ is necessary for the proof of Lemma \ref{lemma:IKOS}. The is because we are trying to stop the adversary from learning a greater variety of things when we have more parties. However it may be the case that Theorem 4.1 could follow from a weaker guarantee than provided by Lemma \ref{lemma:IKOS} and such a property might be true without the presence of this term.

It is an open problem to prove a lower bound greater than two on the number of messages required to get $O(1)$ error on real summation. A proof that one message is not enough is given in \cite{DBLP:journals/corr/abs-1903-02837}.
\end{remark}

\subsection{Improving the Constants}
\label{sec:improving}

  The constants implied by this proof can be improved by using a sharper bound for $\binom{2k}{k}$ in inequality \ref{al:binombound}. Using the bound $\binom{2k}{k}\geq \frac{4^k}{\sqrt{\pi(k+1/2)}}$ gives that taking $k$ to be the ceiling of the root of
\begin{equation*}
  k=1+\sigma+\frac{5\lceil \log(q)\rceil}{2}+\frac{1}{4}\log(\pi(k+\frac{1}{2}))
\end{equation*}
suffices in the statement of Lemma~\ref{lemma:twocase}. The resulting value of $k$ is
\begin{equation*}
  \frac{5}{2}\log(q)+\sigma+\frac{1}{4}\log(\log(q)+\sigma)+O(1).
\end{equation*}
Adding $\log(n-1)$ to the root before taking the ceiling gives a value of $k$ for which Lemma \ref{lemma:IKOS} holds.

\bibliographystyle{plain}
\bibliography{manymessages.bib}

\begin{thebibliography}{1}

\bibitem{DBLP:journals/corr/abs-1903-02837}
Borja Balle, James Bell, Adri{\`{a}} Gasc{\'{o}}n, and Kobbi Nissim.
\newblock The privacy blanket of the shuffle model.
\newblock abs/1903.02837, 2019.

\bibitem{DBLP:journals/corr/abs-1808-01394}
Albert Cheu, Adam~D. Smith, Jonathan Ullman, David Zeber, and Maxim Zhilyaev.
\newblock Distributed differential privacy via shuffling.
\newblock In {\em Advances in Cryptology - {EUROCRYPT} 2019}, 2019.

\bibitem{erlingsson2019amplification}
{\'U}lfar Erlingsson, Vitaly Feldman, Ilya Mironov, Ananth Raghunathan, Kunal
  Talwar, and Abhradeep Thakurta.
\newblock Amplification by shuffling: From local to central differential
  privacy via anonymity.
\newblock In {\em Proceedings of the Thirtieth Annual ACM-SIAM Symposium on
  Discrete Algorithms}, pages 2468--2479. SIAM, 2019.

\bibitem{DBLP:journals/corr/abs-1906-08320}
Badih Ghazi, Rasmus Pagh, and Ameya Velingker.
\newblock Scalable and differentially private distributed aggregation in the
  shuffled model.
\newblock {\em CoRR}, abs/1906.08320, 2019.

\bibitem{GX}
S.~{Goryczka} and L.~{Xiong}.
\newblock A comprehensive comparison of multiparty secure additions with
  differential privacy.
\newblock {\em IEEE Transactions on Dependable and Secure Computing},
  14(5):463--477, Sep. 2017.

\bibitem{IZ89}
Russell Impagliazzo and David Zuckerman.
\newblock How to recycle random bits.
\newblock {\em Proc. 30th FOCS}, 1989.

\bibitem{ikos}
Yuval Ishai, Eyal Kushilevitz, Rafail Ostrovsky, and Amit Sahai.
\newblock Cryptography from anonymity.
\newblock In {\em {FOCS}}, pages 239--248. {IEEE} Computer Society, 2006.

\bibitem{shi}
Elaine Shi, Richard Chow, T~h.~Hubert~Chan, Dawn Song, and Eleanor Rieffel.
\newblock Privacy-preserving aggregation of time-series data.
\newblock In {\em In NDSS}, 2011.

\bibitem{DBLP:conf/icml/WangFS15}
Yu{-}Xiang Wang, Stephen~E. Fienberg, and Alexander~J. Smola.
\newblock Privacy for free: Posterior sampling and stochastic gradient monte
  carlo.
\newblock In {\em {ICML}}, volume~37 of {\em {JMLR} Workshop and Conference
  Proceedings}, pages 2493--2502. JMLR.org, 2015.

\end{thebibliography}

\end{document}